\documentclass{llncs}
\usepackage{amsmath}
\usepackage{amssymb}
\usepackage{graphicx}
\usepackage{stmaryrd}
\newcommand{\step}[1]{\stackrel{#1}{\longrightarrow}}
\newcommand{\primestep}[1]{\stackrel{#1}{\longrightarrow}\hspace{-3pt}'}
\newcommand{\bleft}{\texttt{[}}
\newcommand{\bright}{\texttt{]}}
\newcommand{\tless}{\texttt{<}}
\newcommand{\tmore}{\texttt{>}}
\newcommand{\inv}[1]{\square\,#1}
\newcommand{\diam}[1]{\Diamond\,#1}
\newcommand{\dlf}{\mathit{dlf}}
\newcommand{\pbis}{\ensuremath{\precapprox}}
\newcommand{\red}[1]{\stackrel{#1}{\rightarrowtriangle}}
\newcommand{\sub}[1]{\mathit{sub}\,(#1)}
\newcommand{\ispart}[2]{\mathit{part}\,(#1,#2)}
\title{Maximally Permissive Controlled System \\ Synthesis for Modal Logic
  \thanks{Supported by the EU FP7 Programme under grant agreement no. 295261 (MEALS).}}
\author{A.C. van Hulst \and M.A. Reniers \and W.J. Fokkink}
\institute{Eindhoven University of Technology, The Netherlands}
\begin{document}
\maketitle
\begin{abstract}
We propose a new method for controlled system synthesis on
non-deterministic automata, which includes the synthesis for
deadlock-freeness, as well as invariant and reachability
expressions. Our technique restricts the behavior of a 
Kripke-structure with labeled transitions, representing the
uncontrolled system, such that it adheres 
to a given requirement specification in an expressive modal logic. 
while all non-invalidating behavior is retained. This induces 
maximal permissiveness in the context of supervisory control. 
Research presented in this paper allows a system model to be 
constrained according to a broad set of liveness, safety and 
fairness specifications of desired behavior, and embraces
most concepts from Ramadge-Wonham supervisory control, including
controllability and marker-state reachability. Synthesis is defined
in this paper as a formal construction, which allowed a careful
validation of its correctness using the Coq proof assistant. 
\end{abstract}
\section{Introduction}
\label{sec:introduction}
This paper presents a new technique for controlled system synthesis
on non-deterministic automata for requirements in modal logic. The
controlled systems perspective treats the system under control --- the
\emph{plant} --- and a system component which restricts the plant
behavior --- the \emph{controller} --- as a single integrated entity.
This means that we take a model of all possible plant behavior, and
construct a new model which is constrained according to a logical
specification of desired behavior --- the \emph{requirements}. The
automated generation, or \emph{synthesis}, of such a restricted
behavioral model incorporates a number of concepts from supervisory
control theory \cite{cassandras}, which affirm the generated model as being a proper
controlled system, in relation to the original plant specification.
Events are strictly partitioned into being either controllable or
uncontrollable, such that synthesis only disallows events of the
first type. In addition, synthesis preserves all behavior which
does not invalidate the requirements, thereby inducing maximal
permissiveness \cite{cassandras} in the context of supervisory control. The requirement
specification formalism extends Hennessy-Milner Logic \cite{hml} with invariant,
reachability, and deadlock-freeness expressions, and is also able to express
the supervisory control concept of marker-state reachability \cite{ramadge}.

The intended contribution of this paper is two-fold. First, it presents
a technique for controlled system synthesis in a non-deterministic 
context. Second, it defines synthesis for a modal logic which is
able to capture a broad set of requirements.

Regarding the first contribution, it should be noted that supervisory
control synthesis is often approached using a deterministic model of
both plant and controller. Notably, the classic Ramadge-Wonham supervisory
control theory \cite{ramadge} is a well-researched example of this setup. 
The resulting controller restricts the behavior of the deterministic
plant model, thereby ensuring that it operates according to the
requirements via event-based synchronization. A controlled system
can not be constructed in this way for a non-deterministic model, as
illustrated by example in Fig. \ref{fig:printer}. Assume that
we wish to restrict all technically possible behavior of an indicator
light of a printer (Fig. \ref{fig:printer}a) such that after a single
$\mathit{refill}$ event, the indicator light turns $\mathit{green}$ 
immediately. In the solution shown in Fig. \ref{fig:printer}b, the
self-loop at the right-most state is disallowed, as indicated using
dashed lines, while all other behavior is preserved. Note that it is
not possible to construct this maximally-permissive solution using
event-based synchronization, as shown in \cite{paco}. However, an outcome as shown in Fig.
\ref{fig:printer}b can be obtained by applying synthesis for the 
property $\inv{\bleft\mathit{refill}\bright}\mathit{green}$, using
the method described in this paper. As this example clearly shows, the
strict separation between plant and controller is not possible for
non-deterministic models, and therefore we interpret the controlled
system as a singular entity.

\begin{figure}
\includegraphics[scale=.65]{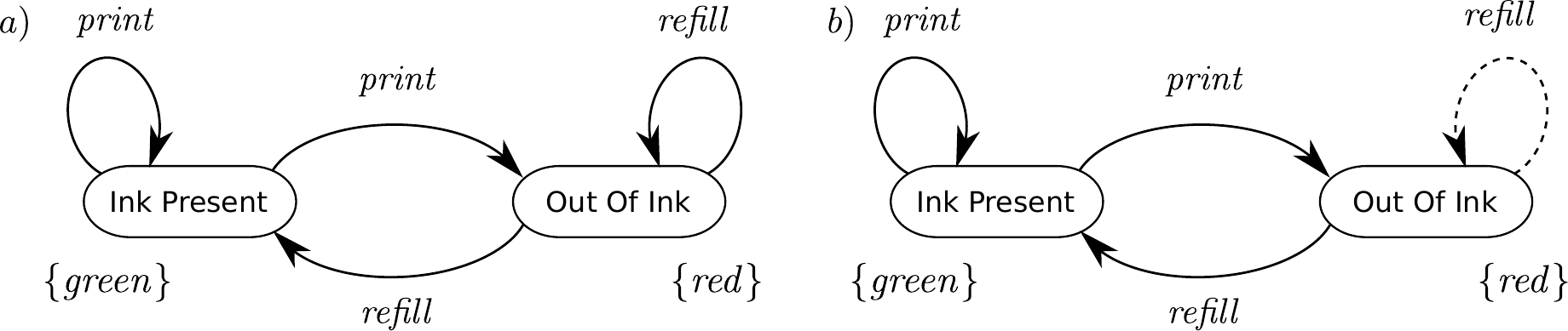}
\caption{Example of control synthesis in a non-deterministic context.
  A model for all possible behavior of an ink presence indicator light 
  of a printer is restricted in such a way that after every 
  $\mathit{refill}$, the state labeled with $\mathit{green}$ is reached
  directly. Synthesis, as defined in this paper, of the property 
  $\square\,\texttt{[}\mathit{refill}\texttt{]}\mathit{green}$ upon
  the model in Fig. \ref{fig:printer}a, results in a synthesis
  outcome as in Fig. \ref{fig:printer}b., where disallowed behavior
  is indicated using dashed lines.}
\label{fig:printer}
\end{figure}

The synthesized requirement in Fig. \ref{fig:printer}b represents a
typical example of a requirement in modal logic applied in this paper.
This requirement formalism, which extends Hennessy-Milner Logic with
invariant and reachability operators, and also includes a test for
deadlock-freeness, is able to express a broad set of liveness, safety,
and fairness properties. For instance, an important liveness concept
in supervisory control theory involves marker-state reachability,
which is informally expressed as the requirement that it is always
possible to reach a state which is said to be \emph{marked}. This
requirement is modeled as $\inv{\diam{\mathit{marked}}}$, using the
requirement specification logic, in conjunction with assigning $\mathit{marked}$ 
as a separate property to the designated states in the Kripke-model.

Safety-related requirements, which model the
absence of faulty behavior, include deadlock-avoidance,
expressed as $\inv{\dlf}$ (i.e., invariantly, deadlock-free) and
safety requirements of a more general nature. For instance, one
might require that some type of communicating system is always able
to perform a $\mathit{receive}$ step, directly after every 
$\mathit{send}$ step. Such a property is expressed as
$\inv{\bleft\mathit{send}\bright\tless\mathit{receive}\tmore
\mathit{true}}$, using the requirement specification logic applied in this paper.
In addition, we argue that this logic is able to model a limited class of
of fairness properties. One might require from a system which
uses a shared resource that in every state, the system has
access to the resource (the state has the $\mathit{access}$ property),
or it can do a $\mathit{lock}$ step to claim the resource, after which
access is achieved immediately. In order to constrain the behavior of the plant
specification such that it adheres to this requirement, we synthesize
the property $\inv{(\mathit{access}\vee\tless\mathit{lock}\tmore
\mathit{access})}$. 

The remainder of this paper is set up as follows. We consider a number
of related works on control synthesis in Section \ref{sec:related}. 
Preliminary definitions in Section \ref{sec:definitions} introduce
formal notions up to a formal statement of the synthesis problem.
Section \ref{sec:synthesis} concerns the formal definition of the
synthesis construction while Section \ref{sec:correctness} lists
a number of important theorems indicating correctness of the synthesis
approach, including detailed proofs, while these proofs are available
in computer-verified form as well \cite{sources}.

\section{Related Work}
\label{sec:related} 
Earlier work by the same authors concerning synthesis for modal logic
includes a recursive synthesis method for Hennessy-Milner Logic \cite{acsd},
and a synthesis method for a subset of the logic considered in this paper,
with additional restrictions on combinations of modal operators \cite{wodes}.

We analyze related work along three lines: 1) Allowance of non-determinism
in plant specifications, 2) Expressiveness of the requirement specification
formalism, and, 3) Adhering to some form of maximal permissiveness. Based
on this comparison, we analyze related work alongside the intended
improvements in this paper.

Ramadge-Wonham supervisory control \cite{ramadge} defines a 
broadly-embraced meth-odology for controller synthesis on 
deterministic plant models for requirements specified using
automata. It defines a number of key elements in the relationship
between plant and controlled system, such as controllability, 
marker-state reachability, deadlock-freeness and maximal permissiveness.
Despite the fact that a strictly separated controller
offers advantages from a developmental or implementational point
of view, we argue that increased abstraction and flexibility justifies
research into control synthesis for non-deterministic
models. In addition, we emphasize that the automata-based description
of desired behavior in the Ramadge-Wonham framework \cite{ramadge} 
does not allow the specification of requirements of existential nature.
For instance, in this framework it is not possible to specify that a step labeled with a particular
event \emph{must} exist, hence the choice of modal logic as our
requirement formalism.

Work by Pnueli and Rosner \cite{pnueli} concerns a treatment of synthesis
for reactive systems, based upon a finite transducer model of the plant,
and a temporal specification of desired behavior. This synthesis
construction is developed further for deterministic automata in \cite{pnueli},
but the treatment remains non-maximal. This research is extended 
in \cite{games}, which connects reactive synthesis to Ramadge-Wonham
supervisory control using a parity-game based approach. The methodology
described in \cite{games} transforms the synthesis control problem for
$\mu$-calculus formulas in such a way that the set of satisfying models of a
$\mu$-calculus formula coincides with the set of controllers which enforce
the controlled behavior. Although non-determinism is allowed in 
plant-specifications in \cite{games}, the treatment via loop-automata does 
not allow straightforward modeling of all (infinite) behaviors. Also,
maximal permissiveness is not specified as a criterion for control synthesis
in \cite{games}. Interesting follow-up research is found in \cite{nonnon},
for non-deterministic controllers over non-deterministic processes. However,
the specification of desired behavior is limited to
alternating automata \cite{nonnon}, which do not allow complete coverage of invariant
expressions over all modalities, or an equivalent thereof. Reactive synthesis
is further applied to hierarchical \cite{aminof} and recursive \cite{recursive}
component-based specifications. These works, which both are based upon a 
deterministic setting, provide a quite interesting setup from a developmental
perspective, due to their focus on the re-usability of components. 

Research in \cite{combining} relates Ramadge-Wonham supervisory
control to an equivalent model-checking problem, resulting in important
observations regarding the mutual exchangeability and complexity
analysis of both problems. Despite the fact that research in \cite{combining}
is limited to a deterministic setting, and synthesis results are not
guaranteed to be maximally permissive, it does incorporate a quite
expressible set of $\mu$-calculus requirements. Other research based upon
a dual approach between control synthesis and model checking studies the
incremental effects of transition removal upon the validity of $\mu$-calculus
formulas \cite{incremental}, based on \cite{cleaveland}. 

Research by D'Ippolito and others \cite{ippolito}, \cite{nonanomalous}
is based upon the framework of the world machine model for the synthesis of
liveness properties, stated in fluent temporal logic. A distinction is made
between controlled and monitored behavior, and between system goals and
environment assumptions \cite{ippolito}. A controller is then derived
from a winning strategy in a two-player game between original and required
behavior, as expressed in terms of the notion of generalized reactivity,
as introduced in \cite{ippolito}. Research in \cite{ippolito} also emphasizes the fact that
pruning-based synthesis is not adequate for control of non-deterministic
models, and it defines synthesis of liveness goals under a maximality
criterion, referred to as best-effort controller. However, this maximality
requirement is trace-based and is therefore not able to signify inclusion
of all possible infinite behaviors. In addition, some results in \cite{ippolito}
are based upon the assumption of a deterministic plant specification.

\section{Definitions}
\label{sec:definitions}
We assume a set $\mathcal{E}$ of events and a set 
$\mathcal{P}$ of state-based properties. In addition, we assume a strict
partition of $\mathcal{E}$ into controllable events $\mathcal{C}$ 
and uncontrollable events $\mathcal{U}$, such that $\mathcal{C}\cup
\mathcal{U}=\mathcal{E}$ and $\mathcal{C}\cap\mathcal{U}=\emptyset$.
State-based properties are used to capture state-based information, and
are assigned to states using a labeling function. Example properties
are shown in Fig. \ref{fig:printer}, as $\mathit{red}$ and 
$\mathit{green}$. Fig. \ref{fig:printer} also shows examples of
the events $\mathit{print}$ and $\mathit{refill}$, which are assumed
to be controllable in this example. Events are used to capture
system dynamics, and represent actions occurring when the system
transitions between states. Controllable events may be used to model
actuator actions in the plant, while an uncontrollable event may
represent, for instance, a sensor reading. Basic properties and
events are used to model plant behavior in the form of a 
Kripke-structure \cite{bull} with labeled transitions, to be 
abbreviated as Kripke-LTS, as formalized in Definition \ref{def:K}.
Note that we assume finiteness of the given transition relation. 

\begin{definition}
\label{def:K}
We define a Kripke-LTS as a four-tuple $(X,L,\longrightarrow,x)$ for
state-space $X$, labeling function $L:X\mapsto 2^\mathcal{P}$, finite transition
relation $\longrightarrow\,\subseteq X\times\mathcal{E}\times X$, and
initial state $x\in X$. The universe of all Kripke-LTSs is denoted by 
$\mathcal{K}$. 
\end{definition}

As usual, we will use the notation $x\step{e}x'$
to denote that $(x,e,x')\in\longrightarrow$. The reflexive-transitive closure
$\longrightarrow^*$ of a transition relation $\longrightarrow$ is
defined in the following way: For all $x\in X$ it holds that $(x,x)\in
\longrightarrow^*$ and if there exist $e\in\mathcal{E}$ and
$y,x'\in X$ such that $x\step{e}y$ and $y\longrightarrow^*x'$ then $(x,x')\in
\longrightarrow^*$. 

Two different behavioral preorders are applied in this paper. The first
is the simulation preorder, which is reiterated in Definition \ref{def:sim}.
Simulation is used to signify inclusion of behavior, while synthesis may
alter the transition structure due to, for instance, unfolding. Simulation
as applied in this paper is a straightforward adaptation of the definition
of simulation in \cite{glabbeek}. 

\begin{definition}
\label{def:sim}
For $k'=(X',L',\longrightarrow',x')$ and $k=(X,L,\longrightarrow,x)$ we say that $k'$
and $k$ are related via simulation (notation: $k'\preceq k$) if there exists a relation $R\subseteq
X'\times X$ such that $(x',x)\in R$ and for all $(y',y)\in R$ the following holds:
\begin{enumerate}
\item We have $L'(y')=L(y)$; and
\item If $y'\primestep{e}z'$ then there exists a step $y\step{e}z$ such that 
      $(z',z)\in R$. 
\end{enumerate}
\end{definition}

Partial bisimulation \cite{paco} is an extension of simulation such that
the subset of uncontrollable events is bisimulated. For plant specification
$k\in\mathcal{K}$ and synthesis result $s\in\mathcal{K}$ we require that
$s$ is related to $k$ via partial bisimulation. This signifies the fact that
synthesis did not disallow any uncontrollable event, which implies controllability
in the context of supervisory control. Research in \cite{paco} details the 
nature of this partial bisimulation preorder.

\begin{definition}
\label{def:pbis}
If $k'=(X',L',\longrightarrow',x')$ and $k=(X,L,\longrightarrow,x)$, 
then $k'$ and $k$ are related via partial bisimulation
(notation: $k'\pbis k$) if there exists a relation 
$R\subseteq X'\times X$ such that $(x',x)\in R$ and for all 
$(y',y)\in R$ the following holds:
\begin{enumerate}
\item We have $L'(y')=L(y)$;
\item If $y'\primestep{e}z'$ then there exists a step $y\step{e}z$ such that $(z',z)\in R$; and
\item If $y\step{e}z$ for $e\in\mathcal{U}$ then there exists a step $y'\primestep{e}z'$ such that $(z',z)\in R$.
\end{enumerate}
\end{definition}

Requirements are specified using a modal logic $\mathcal{F}$ given in Definition
\ref{def:F}, which is built upon the set of state-based formulas $\mathcal{B}$ in Definition
\ref{def:B}. 

\begin{definition}
\label{def:B}
The set of state-based formulas $\mathcal{B}$ is defined by the grammar:
\end{definition}
\begin{center}
$\mathcal{B}\mathbin{\texttt{::=}}\mathit{true}\mid\mathit{false}\mid\mathcal{P}\mid
  \neg\mathcal{B}\mid\mathcal{B}\wedge\mathcal{B}\mid\mathcal{B}\vee\mathcal{B}$
\end{center}

As indicated in Definition \ref{def:B}, state-based formulas are constructed
from a straightforward Boolean algebra which includes the basic expressions
$\mathit{true}$ and $\mathit{false}$, as well as a state-based property
test for $p\in\mathcal{P}$. Formulas in $\mathcal{B}$ are then combined using the 
standard Boolean operators $\neg$, $\wedge$ and $\vee$.

\begin{definition}
\label{def:F}
The requirement specification logic $\mathcal{F}$ is defined by the grammar:
\end{definition}
\begin{center}
$\mathcal{F}\mathbin{\texttt{::=}}\mathcal{B}\mid
  \mathcal{F}\wedge\mathcal{F}\mid\mathcal{B}\vee\mathcal{F}\mid
  \bleft\mathcal{E}\bright\mathcal{F}\mid\tless\mathcal{E}\tmore\mathcal{F}\mid
  \inv{\mathcal{F}}\mid\diam{\mathcal{B}}\mid\dlf$
\end{center}

We briefly consider the elements of the requirement logic $\mathcal{F}$. Basic
expressions in Definition $\ref{def:B}$ function as the basic building blocks
in the modal logic $\mathcal{F}$.  Conjunction is
included, having its usual semantics, while disjunctive formulas are restricted
to those having a state-based formula in the left-hand disjunct. This restriction
guarantees correct synthesis solutions, since it enables a local state-based
test for retaining the appropriate transitions.
The formula $\bleft e\bright f$ can be used to test
whether $f$ holds after every $e$-step, while the formula $\tless e\tmore f$
is used to assess whether there exists an $e$-step after which $f$ holds. 
These two operators thereby follow their standard semantics
from Hennessy-Milner Logic \cite{hml}. An invariant formula $\inv{f}$ tests whether
$f$ holds in every reachable state, while a reachability expression 
$\diam{b}$ may be used to check whether there exists a path such that
the state-based formula $b$ holds at some state on this path. Note that the sub-formula $b$
of a reachability expression $\diam{b}$ is restricted to a state-based formula
$b\in\mathcal{B}$. This is used to acquire unique synthesis
solutions. The deadlock-free test $\dlf$ tests whether there exists an
outgoing step of a particular state. Combined with the invariant operator,
the formula $\inv{\dlf}$ can be used to specify that the entire synthesized
system should be deadlock-free. Validity of formulas in $\mathcal{B}$ and
$\mathcal{F}$, with respect to a Kripke-LTS $k\in\mathcal{K}$, is as shown
in Definition \ref{def:val}.

\begin{definition}
\label{def:val}
For $k=(X,L,\longrightarrow,x)\in\mathcal{K}$ and $f\in\mathcal{F}$ we define
if $k$ \emph{satisfies} $f$ (notation: $k\vDash f$) as follows: 
\end{definition}

\begin{center}
\scalebox{.88}{
\begin{tabular}{c}
$\displaystyle\frac{}{k\vDash\mathit{true}}$
\quad
$\displaystyle\frac{p\in L(x)}{(X,L,\longrightarrow,x)\vDash p}$
\quad
$\displaystyle\frac{\neg k\vDash b}{k\vDash\neg b}$
\quad
$\displaystyle\frac{k\vDash f\quad k\vDash g}{k\vDash f\wedge g}$
\quad
$\displaystyle\frac{k\vDash f}{k\vDash f\vee g}$
\quad
$\displaystyle\frac{k\vDash g}{k\vDash f\vee g}$
\bigskip \\
$\displaystyle\frac{\forall\, x\step{e}x'\quad (X,L,\longrightarrow,x')\vDash f}{(X,L,\longrightarrow,x)\vDash\bleft e\bright f}$
\quad
$\displaystyle\frac{x\step{e}x'\quad (X,L,\longrightarrow,x')\vDash f}{(X,L,\longrightarrow,x)\vDash\tless e\tmore f}$
\bigskip \\
$\displaystyle\frac{\forall\,x\longrightarrow^*x'\quad (X,L,\longrightarrow,x')\vDash f}{(X,L,\longrightarrow,x)\vDash\inv{f}}$
\quad
$\displaystyle\frac{x\longrightarrow^*x'\quad (X,L,\longrightarrow,x')\vDash b}{(X,L,\longrightarrow,x)\vDash\diam{b}}$
\quad
$\displaystyle\frac{x\step{e}x'}{(X,L,\longrightarrow,x)\vDash\dlf}$ \\
\end{tabular}}
\end{center}

We may now formulate the synthesis problem in terms of the previous
definitions in Definition \ref{def:synthesis}. Research in this paper
focuses on resolving this problem.

\begin{definition}
\label{def:synthesis}
Given $k\in\mathcal{K}$ and $f\in\mathcal{F}$, find $s\in\mathcal{K}$
in a finite method such that the following holds: 1) $s\vDash f$, 2) $s\preceq k$,
3) $s\pbis k$, 4) For all $k'\preceq k$ and $k'\vDash f$ holds $k'\preceq s$,
or determine that such an $s$ does not exist.
\end{definition}

These four properties are interpreted in the context of supervisory
control synthesis as follows. Property 1 (\emph{validity}) states that
the synthesis result satisfies the synthesized formula. Property 2 (\emph{simulation})
asserts that the synthesis result is a restriction of the original
behavior, while property 3 (\emph{controllability}) ensures that no
accessible uncontrollable behavior is disallowed during synthesis. 
Controllability is achieved if the synthesis result is related to
the original plant-model via partial bisimulation, which adds bisimulation
of all uncontrollable events to the second property. Note that the third
property implies the second property, as can be observed in Definitions
\ref{def:sim} and \ref{def:pbis}. Property 4 (\emph{maximality}) states that 
synthesis removes the least possible behavior, and thereby induces
maximal permissiveness. That is, the behavior of every alternative
synthesis option is included in the behavior of the synthesis result.

\section{Synthesis}
\label{sec:synthesis}

The purpose of this section is to illustrate the formal definition of the
synthesis construction. Synthesis as defined in this paper involves three 
major steps, after which a modified Kripke-LTS is obtained. If synthesis
is successful, the resulting structure satisfies all synthesis requirements, 
as stated in Definition \ref{def:synthesis}. The first stage of synthesis 
transforms the original transition relation 
$\longrightarrow\,\subseteq X\times\mathcal{E}\times X$, for state-space $X$, 
into a new transition relation $\longrightarrow_0\,\subseteq 
(X\times\mathcal{F})\times\mathcal{E}\times(X\times\mathcal{F})$
over the state-formula product space. This allows us to indicate
precisely which modal (sub-)formula needs to hold at each point
in the new transition relation. The second step removes transitions based 
upon an assertion of \emph{synthesizability} of formulas assigned to the 
target states of transitions. This second step is repeated until no more 
transitions are removed. The third and final synthesis step tests whether 
synthesis has been successful by evaluating whether the \emph{synthesizability} 
predicate holds for every remaining state. An overview of the synthesis process
is shown in Fig. \ref{fig:overview}.

\begin{figure}
\includegraphics[scale=.80]{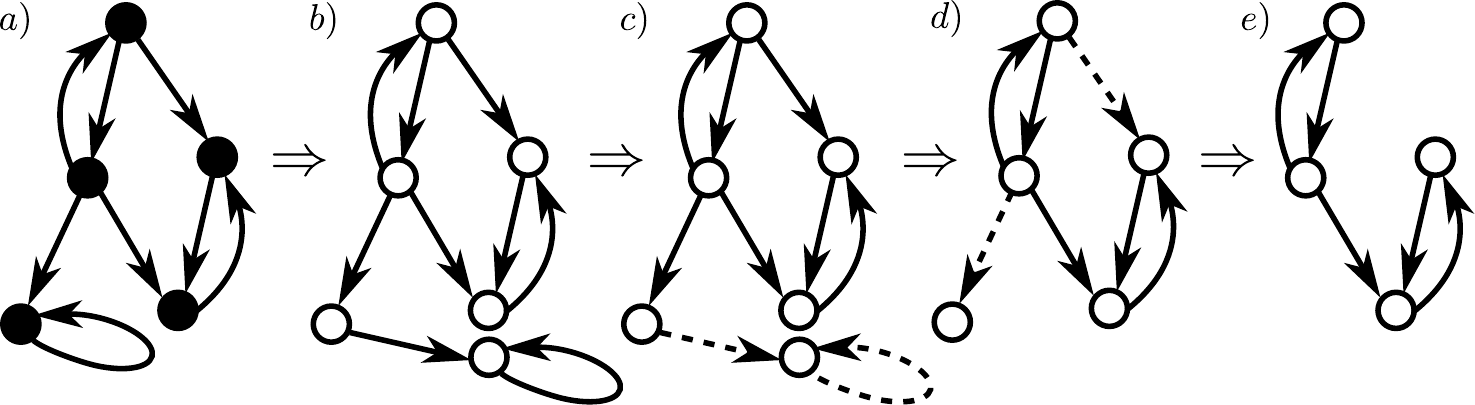}
\caption{Overview of the synthesis process. Steps in the original transition
  relation (Fig. \ref{fig:overview}a) of type $x\step{e}x'$ are combined with
  reductions of the synthesized requirement (Fig. \ref{fig:overview}b), resulting
  in transitions of type $(x,f)\step{e}_0(x',f')$, and possible inducing unfoldings. 
  Transition are then removed (Fig. \ref{fig:overview}c-\ref{fig:overview}d) based upon a local synthesizability
  test for formulas assigned to target states, until synthesizability holds in
  every reachable state (Fig. \ref{fig:overview}e).}
\label{fig:overview}
\end{figure}

A formal derivation of the starting point in the synthesis process
$\longrightarrow_0$ is shown in Definition \ref{def:zero}. This definition
relies upon the notion of sub-formulas, as formalized in Definition 
\ref{def:sub}. 

\begin{definition}
\label{def:sub}
We say that $f\in\mathcal{F}$ is a sub-formula of $g\in\mathcal{F}$
(notation $f\in\sub{g}$) if this can be derived by the following rules: 
\end{definition}
\begin{center}
\begin{tabular}{c}
$\displaystyle\frac{}{f\in\sub{f}}$
\quad
$\displaystyle\frac{f\in\sub{g}}{f\in\sub{g\wedge h}}$
\quad
$\displaystyle\frac{f\in\sub{h}}{f\in\sub{g\wedge h}}$
\quad
$\displaystyle\frac{f\in\sub{g}}{f\in\sub{\inv{g}}}$ \\
\end{tabular}
\end{center}

As shown in Definition \ref{def:sub}, sub-formulas align precisely with
the restrictions on formula expansion for conjunctive and invariant
formulas, as embedded in the the formula reductions shown in Definition
\ref{def:zero}. These restrictions on formula expansion guarantee finiteness
of formula reductions. 

\begin{definition}
\label{def:zero}
For state-space $X$ and original transition relation $\longrightarrow
\subseteq X\times\mathcal{E}\times X$, we define the starting point
of synthesis $\longrightarrow_0\subseteq (X\times\mathcal{F})\times
\mathcal{E}\times(X\times\mathcal{F})$ as follows: 
\end{definition}
\begin{center}
\scalebox{.89}{
\begin{tabular}{c}
$\displaystyle\frac{x\step{e}x'}{(x,b)\step{e}_0(x,\mathit{true})}$
\quad
$\displaystyle\frac{(x,f)\step{e}_0(x',f')\quad(x,g)\step{e}_0(x',g')\quad g'\in\sub{f'}}{(x,f\wedge g)\step{e}_0(x',f')}$
\bigskip \\
$\displaystyle\frac{(x,f)\step{e}_0(x',f')\quad(x,g)\step{e}_0(x',g')\quad g'\not\in\sub{f'}}{(x,f\wedge g)\step{e}_0(x',f'\wedge g')}$
\quad
$\displaystyle\frac{x\step{e}x'\quad x\vDash b}{(x,b\vee f)\step{e}_0(x',\mathit{true})}$
\bigskip \\
$\displaystyle\frac{(x,f)\step{e}_0(x',f')}{(x,b\vee f)\step{e}_0(x',f')}$
\quad
$\displaystyle\frac{x\step{e}x'}{(x,\bleft e\bright f)\step{e}_0(x',f)}$
\quad
$\displaystyle\frac{x\step{e}x'\quad e\not=e'}{(x,\bleft e'\bright f)\step{e}_0(x',\mathit{true})}$
\bigskip \\
$\displaystyle\frac{x\step{e}x'\quad}{(x,\tless e\tmore f)\step{e}_0(x',f)}$
\quad
$\displaystyle\frac{x\step{e}x'\quad}{(x,\tless e'\tmore f)\step{e}_0(x',\mathit{true})}$
\quad
$\displaystyle\frac{(x,f)\step{e}_0(x',f')\quad f'\in\sub{\inv{f}}}{(x,\inv{f})\step{e}_0(x',\inv{f})}$
\bigskip \\
$\displaystyle\frac{(x,f)\step{e}_0(x',f')\quad f'\not\in\sub{\inv{f}}}{(x,\inv{f})\step{e}_0(x',\inv{f}\wedge f')}$
\quad
$\displaystyle\frac{x\step{e}x'}{(x,\diam{b})\step{e}_0(x',\mathit{true})}$
\bigskip \\
$\displaystyle\frac{x\step{e}x'}{(x,\diam{b})\step{e}_0(x',\diam{b})}$
\quad
$\displaystyle\frac{x\step{e}x'}{(x,\dlf)\step{e}_0(x',\mathit{true})}$ \\
\end{tabular}}
\end{center} 

The starting point of synthesis $\longrightarrow_0$ is subjected
to transition removal via a synthesizability test for formulas assigned
to the target states of transitions. In generalized form, we define
a formula $f\in\mathcal{F}$ to be synthesizable in the state-formula
pair $(x,g)$ if this can be derived by the rules in Definition 
\ref{def:syn}. For an appropriate definition of synthesizability,
it is necessary to extend the notion of sub-formulas in such a way
that a state-based evaluation can be incorporated, in order to
handle disjunctive formulas correctly. This leads to
the sub-formula notion called $\mathit{part}$, which is shown in
Definition \ref{def:part}. 

\begin{definition}
\label{def:part}
We say that a formula $f\in\mathcal{F}$ is a part of a formula
$g\in\mathcal{F}$ in the context of a state based evaluation for
$(X,L,\longrightarrow,x)$ if this can be derived as follows:
\end{definition}
\begin{center}
\begin{tabular}{c}
$\displaystyle\frac{}{f\in\ispart{x}{f}}$
\quad
$\displaystyle\frac{f\in\ispart{x}{g}}{f\in\ispart{x}{g\wedge h}}$
\quad
$\displaystyle\frac{f\in\ispart{x}{h}}{f\in\ispart{x}{g\wedge h}}$ 
\bigskip \\
$\displaystyle\frac{x\not\vDash b\quad f\in\ispart{x}{g}}{f\in\ispart{x}{b\vee g}}$
\quad
$\displaystyle\frac{f\in\ispart{x}{g}}{f\in\ispart{x}{\inv{g}}}$ \\
\end{tabular}
\end{center}

Partial formulas as shown in Definition \ref{def:part} are used in
the definition of synthesizability as shown in Definition 
\ref{def:syn}. In particular, this is used in the definition of
synthesizability for formulas of type $\tless e\tmore f$. In addition,
partial formulas play a major role in the correctness proofs of the
synthesis method. 

\begin{definition}
\label{def:syn}
With regard to an intermediate relation $\longrightarrow_n\subseteq
(X\times\mathcal{F})\times\mathcal{E}\times(X\times\mathcal{F})$ in the
synthesis procedure, we say that a formula $f\in\mathcal{F}$ is
\emph{synthesizable} in the state-formula pair $(x,g)$ (notation:
$(x,g)\uparrow f$) if this can be derived as follows:
\end{definition}
\begin{center}
\begin{tabular}{c}
$\displaystyle\frac{x\vDash b}{(x,g)\uparrow b}$
\quad
$\displaystyle\frac{(x,g)\uparrow f_1\quad(x,g)\uparrow f_2}{(x,g)\uparrow f_1\wedge f_2}$
\quad
$\displaystyle\frac{x\vDash b}{(x,g)\uparrow b\vee f}$
\quad
$\displaystyle\frac{(x,g)\uparrow f}{(x,g)\uparrow b\vee f}$
\bigskip \\
$\displaystyle\frac{}{(x,g)\uparrow\bleft e\bright f}$
\quad
$\displaystyle\frac{(x',g')\uparrow f\quad (x,g)\step{e}_n(x',g')\quad f\in\ispart{x'}{g'}}{(x,g)\uparrow\tless e\tmore f}$
\bigskip \\
$\displaystyle\frac{(x,g)\uparrow f}{(x,g)\uparrow\inv{f}}$
\quad
$\displaystyle\frac{(x,g)\longrightarrow_n^*(x',g')\quad x'\vDash b}{(x,g)\uparrow\diam{b}}$
\quad
$\displaystyle\frac{(x,g)\step{e}_n(x',g')}{(x,g)\uparrow\dlf}$ \\
\end{tabular}
\end{center}

It is important to note here that the \emph{synthesizability} test
serves as a partial assessment. The synthesizability predicate for $f$ 
holds in the state-formula pair $(x,g)$ if it is possible to modify
outgoing transitions of $(x,g)$ in such a way that $f$ becomes 
satisfied in $(x,g)$. However, synthesizability is not straightforwardly
definable for a number of formulas. For instance, it can not be
directly assessed whether it is possible to satisfy an invariant
formula. Therefore, the synthesizability test in Definition \ref{def:syn}
is designed to operate in conjunction with the process of repeated 
transition removal, as shown in Fig. \ref{fig:overview}. This is
reflected, for instance, in the definition of synthesizability for
an invariant formula $\inv{f}$, which only relies upon $f$ being
synthesizable. However, since synthesizability needs to hold at
every reachable state for synthesis to be successful, such a 
definition of synthesizability for invariant formulas is appropriate
due to its role in the entire synthesis process. A synthesis example
for the invariant formula $\inv{p\wedge\bleft a\bright q}$ is shown
in Fig. \ref{fig:complex}. 

\begin{figure}
\includegraphics[scale=.55]{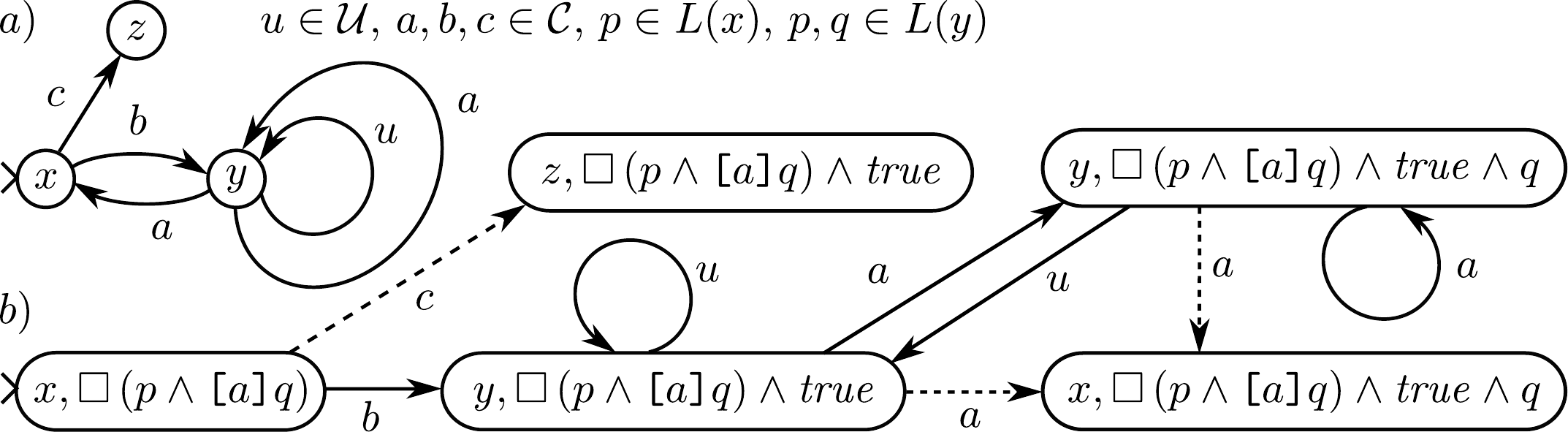}
\caption{Synthesis for the formula $\inv{p\wedge\bleft a\bright q}$ upon
  the model in Fig. \ref{fig:complex}a, resulting in the restricted
  behavioral model shown in Fig. \ref{fig:complex}b. Note the unfolding for 
  $\bleft a\bright q$, the restricted formula-expansion for invariant
  formulas, and transition disabling, indicated by dashed lines, due to 
  the state-based formula $q$ not being synthesizable in $x$, and $p$
  not being synthesizable in $z$.}
\label{fig:complex}
\end{figure}

Using the definitions stated before, we are now ready to define
the main synthesis construction. That is, how transitions are
removed from the synthesis starting point $\longrightarrow_0$,
and how are the subsequent intermediate transition relations
$\longrightarrow_1, \longrightarrow_2, \ldots$ constructed. In
addition, more clarity is required with regard to reaching a
stable point during synthesis, and verifying whether the synthesis
construction has been completed successfully. 

\begin{definition}
\label{def:construction}
For $k=(X,L,\longrightarrow,x)\in\mathcal{K}$ and $f\in\mathcal{F}$,
we define the $n$-th iteration in the synthesis construction as
follows: 
\begin{center}
\scalebox{.95}{
\begin{tabular}{c}
$\displaystyle\frac{(x,f)\step{e}_n(x',f')\quad e\in\mathcal{U}}{(x,f)\step{e}_{n+1}(x',f')}$
\quad
$\displaystyle\frac{(x,f)\step{e}_n(x',f')\quad (x,f)\uparrow f}{(x,f)\step{e}_{n+1}(x',f')}$
\bigskip \\
\end{tabular}}
\end{center}
The corresponding system model $S_{k,f}^n$ is defined as stated below,
using the labeling function $L_\mathit{proj}$, such that $L_\mathit{proj}(y,g)=L(y)$,
for all $y\in X$ and $g\in\mathcal{F}$. 
\begin{center}
$S_{k,f}^n=(X\times\mathcal{F},L_\mathit{proj},\longrightarrow_n,(x,f))$
\end{center}
\end{definition}

One last definition remains, namely \emph{completeness} of the synthesis
construction. The formula reductions induced by Definition \ref{def:zero}
are finite, which implies a terminating construction of the transition
relation $\longrightarrow_0$. Since $\longrightarrow_0$ consists of finitely
many transitions, only finitely many steps may be removed. This means that
at some point, no more transitions are removed, and a stable point will be
reached. If at this point, synthesizability holds at every reachable state,
synthesis is successful. Otherwise, it is not. It is natural that a formal
notion representing the first situation serves as a premise for a number
of correctness results. This notion is formalized as \emph{completeness} in
Definition \ref{def:complete}.

\begin{definition}
\label{def:complete}
For $k=(X,L,\longrightarrow,x)\in\mathcal{K}$, $f\in\mathcal{F}$ and $n\in\mathbb{N}$, we say that
$S_{k,f}^n$ is \emph{complete} if the following holds: 
\begin{center}
For all $(x,f)\longrightarrow_n^*(x',f')$ it holds that $(x',f')\uparrow f'$. 
\end{center}
\end{definition}

\section{Correctness}
\label{sec:correctness}
In this section, we state the theorems for the key properties 
related to synthesis correctness: termination, validity, simulation, 
controllability and maximality, as given in Definition \ref{def:synthesis}. 
All proofs are computer-verified using the Coq proof assistant \cite{sources}. 
The first result is shown in Theorem \ref{thm:term}: the synthesis 
construction is always terminating. 

\begin{theorem}
\label{thm:term}
For $k=(X,L,\longrightarrow,x)\in\mathcal{K}$, having finite $\longrightarrow$, and 
$f\in\mathcal{F}$, there exists an $n\in\mathbb{N}$ such that $S_{k,f}^n=S_{k,f}^m$ 
for all $m>n$. 
\end{theorem}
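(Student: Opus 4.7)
The plan is to reduce termination to three facts: (i) the relation $\longrightarrow_0$ is finite when restricted to the pairs reachable from $(x,f)$, (ii) the sequence $\longrightarrow_0, \longrightarrow_1, \longrightarrow_2, \ldots$ is monotonically decreasing under $\subseteq$, and (iii) a decreasing chain of subsets of a finite set must stabilize. Together these yield an $n$ with $\longrightarrow_n\,=\,\longrightarrow_m$ for all $m>n$; since $S_{k,f}^n$ depends only on the transition relation (the labeling $L_\mathit{proj}$ and the initial pair $(x,f)$ are fixed), this gives $S_{k,f}^n=S_{k,f}^m$.

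Monotonicity is immediate from Definition \ref{def:construction}: each inference rule putting a transition into $\longrightarrow_{n+1}$ has $(x,f)\step{e}_n (x',f')$ as a premise, so $\longrightarrow_{n+1}\,\subseteq\,\longrightarrow_n$. The chain argument is then elementary: in a finite set, any strictly decreasing chain of subsets has length bounded by the cardinality, so eventually $\longrightarrow_{n+1}=\longrightarrow_n$, and by a straightforward induction on $m$, $\longrightarrow_m\,=\,\longrightarrow_n$ for every $m\ge n$.

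The heart of the argument is (i). Finiteness has two coordinates: the $X$-coordinate and the $\mathcal{F}$-coordinate. Every rule in Definition \ref{def:zero} that produces a step $(y,g)\step{e}_0(y',g')$ requires a matching step $y\step{e}y'$ in the original relation, so the $X$-coordinates of reachable pairs lie within the reachable part of $X$, which is finite because $\longrightarrow$ is finite. The work is to show that the set of formulas $F_f=\{g'\mid\exists (y,g')\text{ reachable from }(x,f)\text{ in }\longrightarrow_0^*\}$ is finite, which I would prove by induction on the structure of $f$. The base cases and the cases $\bleft e\bright g$, $\tless e\tmore g$, and $b\vee g$ are easy because they reduce only to $\mathit{true}$, to $g$, or to sub-formulas already covered by the induction hypothesis; the cases $\diam{b}$ and $\dlf$ reduce only to themselves or to $\mathit{true}$.

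The main obstacle lies in the conjunctive and invariant rules, where a formula can grow by pairing with a reduct of a sub-formula. Here the side conditions $g'\not\in\sub{f'}$ and $f'\not\in\sub{\inv{f}}$ are decisive: they guarantee that every added conjunct contributes a genuinely new sub-formula. Concretely, I would show by an additional induction that every reachable formula from $f$ is, up to associativity and commutativity, a conjunction of elements drawn from the finite pool $F_0$ obtained by closing the sub-formulas of $f$ under the non-conjunctive reduction rules, and that no element of $F_0$ occurs more than once as a top-level conjunct. This bounds $F_f$ by $2^{|F_0|}$, hence it is finite. Combined with finiteness of the reachable $X$-component and of $\mathcal{E}$, the reachable portion of $\longrightarrow_0$ is finite, completing the proof.
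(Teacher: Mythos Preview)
Your proposal is correct and follows essentially the same approach as the paper: both reduce termination to finiteness of $\longrightarrow_0$ together with the monotone-shrinking observation, and both establish that finiteness by structural induction on $f$, identifying the conjunctive and invariant cases as the crux and using the $\mathit{sub}$ side conditions to bound the growth. The paper packages the invariant case slightly differently---it builds an explicit depth-indexed family $d(f',D',n)$ of left-nested conjunctions and argues $n\le |D'|$, rather than your subset-of-$F_0$ counting (which, strictly, bounds formulas only up to associativity/commutativity, though finiteness still follows)---but the underlying idea is the same.
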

\begin{proof}
Observing the synthesis construction in Definition \ref{def:construction} it is
straightforward that from the starting point of synthesis $\longrightarrow_0$,
transitions are only removed, and not added. This means that once we are able to
show that $\longrightarrow_0\,\subseteq(X\times\mathcal{F})\times\mathcal{E}\times
(X\times\mathcal{F})$ is finite, given that $\longrightarrow$ is finite, then the
synthesis construction is terminating. In other words, only finitely many transitions
will ever be removed, if they do not satisfy the synthesizability test for the formula
assigned to the target state. The focus of this proof is therefore on the finiteness
of $\longrightarrow_0$. 

Let $\rightarrowtriangle$ denote the formula reduction relation as implicitly defined
in Definition \ref{def:zero}. That is $f\red{e}f'$ if $(x,f)\step{e}_0(x',f')$. The
reflexive-transitive closure of $\rightarrowtriangle$, denoted as $\rightarrowtriangle^*$,
is defined in the natural way. It is clear that if $(x,f)\longrightarrow_0^*(x',f')$ then
$f\rightarrowtriangle^*f'$. This means that if, for each $f$ there exists a finite set 
$D\subset\mathcal{F}$ such that for all $f\rightarrowtriangle^*f'$, $f'\in D$, then
only finitely many transitions are constructed in $\longrightarrow_0$, under the
assumption that $\longrightarrow$ is finite. We prove this property by induction
towards the structure of $f$.

If $f\equiv b$, for $b\in\mathcal{B}$, then choose $D=\{b,\,\mathit{true}\}$, which
is clearly finite. Two other cases can be handled in a similar way. For $f\equiv\diam{b}$,
choose $D=\{\diam{b},\,\mathit{true}\}$ and for $f\equiv\dlf$, choose $D=\{\dlf,\,\mathit{true}\}$.

For the case $f\equiv f_1\wedge f_2$, then by induction we
obtain two finite sets $D_1\subset\mathcal{F}$ and $D_2\subset\mathcal{F}$, containing
the formula-reducts of $f_1$ and $f_2$ respectively. If we choose $D=D_1\cup
\{f_1'\wedge f_2'\mid f_1'\in D_1,\,f_2'\in D_2\}$, then $D$ is clearly also finite.
Assume that $g\in D$ and $g\rightarrowtriangle^*g'$, then by induction towards the 
length of $g\rightarrowtriangle^*$ it is clear that $g'\in D$. Since $f_1\wedge f_2\in D$,
this completes the proof for finiteness of reductions under conjunction. For the
next case for $f\equiv b\vee f'$, for $b\in\mathcal{B}$ we obtain a $D'\subset\mathcal{F}$
representing the finiteness of the set of $f'$-reducts. Then simply choose
$D=D'\cup\{b\vee f',\,\mathit{true}\}$, which is clearly also finite. The cases
for $f\equiv\bleft e\bright f'$ and $f\equiv\tless e\tmore f'$ can be handled in
a similar way. By induction we obtain a finite set $D'\subset\mathcal{F}$ corresponding
to the formula-reducts of $f'$. For these two respective cases it is sufficient to
choose $D=D'\cup\{\bleft e\bright f',\,\mathit{true}\}$ and $D=D'\cup
\{\tless e\tmore f',\,\mathit{true}\}$. 

The case for $f\equiv\inv{f'}$, for some $f'\in\mathcal{F}$, is somewhat more involved.
Let $D'\subset\mathcal{F}$ be obtained via induction, thus containing all $f''\in\mathcal{F}$
such that $f'\rightarrowtriangle^*f''$. Assume that $D'$ is restricted such that it
strictly contains no other elements $f''$ then those which satisfy the $f'\rightarrowtriangle^*f''$
condition. We then define the function $d:\mathcal{F}\times2^\mathcal{F}\times\mathbb{N}\mapsto2^\mathcal{F}$ 
in the following way:

\begin{center}
\begin{math}
\begin{array}{lcl}
d(f,D,0)     & = & \{\inv{f}\} \\
d(f,D,n + 1) & = & d(f,D,n)\cup\{f'\wedge g'\mid f'\in d(f,D,n),\, g'\in D\} \\
\end{array}
\end{math}
\end{center}

As our witness, we then choose $D=d(f',D',\mid D'\mid)$, where $\mid D'\mid$ refers to the
finite number of elements in $D'$. Clearly it holds that $\inv{f'}\in D$, by the definition
of $d$. For each $f'\rightarrowtriangle^*f''$, there exists an $n\in\mathbb{N}$, such that
$f''\in d(f',D',n)$. However, the application of $\mathit{sub}$ in the formula reductions
for conjunction and invariant formulas in Definition \ref{def:zero} ensure that if 
$g\in d(f',D',n)$ then $n\leq\mid D'\mid$, as can be derived using induction towards
$\mid D'\mid$. 
\end{proof}

The second result is shown in Theorem \ref{thm:val}: If synthesis is 
\emph{complete} then the synthesis result satisfies the synthesized 
formula. Since synthesis is terminating, as shown in Theorem 
\ref{thm:term}, this results in a stable point in the synthesis
process. It may then be quickly assessed whether synthesis is complete,
by checking whether synthesizability is satisfied in every remaining
reachable state, upon which the result in Theorem \ref{thm:val} holds.

\begin{lemma}
\label{lem:part_syn}
If $(x,h)\uparrow g$ and $f\in\ispart{x}{g}$ then $(x,h)\uparrow f$.
\end{lemma}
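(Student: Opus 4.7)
The plan is to proceed by induction on the derivation of $f\in\ispart{x}{g}$, inverting at each step the rule of Definition \ref{def:syn} that witnesses $(x,h)\uparrow g$ in order to extract a sub-statement to which the induction hypothesis can be applied.

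In the base case the derivation of $f\in\ispart{x}{g}$ is the reflexive rule, so $f=g$, and the conclusion $(x,h)\uparrow f$ coincides with the hypothesis. For the two conjunctive cases, $g$ has shape $g_1\wedge g_2$ and $f$ lies in $\ispart{x}{g_1}$ (resp.\ $\ispart{x}{g_2}$); from $(x,h)\uparrow g_1\wedge g_2$ the only applicable rule in Definition \ref{def:syn} is the one for conjunction, which forces $(x,h)\uparrow g_1$ and $(x,h)\uparrow g_2$. The induction hypothesis, applied to the appropriate conjunct together with the sub-derivation of $f\in\ispart{x}{g_i}$, then yields $(x,h)\uparrow f$. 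The invariant case is analogous: from $g\equiv\inv{g'}$ and $(x,h)\uparrow\inv{g'}$, Definition \ref{def:syn} forces $(x,h)\uparrow g'$, and induction on $f\in\ispart{x}{g'}$ finishes the case.

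The only mildly delicate case is disjunction, which is precisely where the state-based side condition of Definition \ref{def:part} is used. Here $g\equiv b\vee g'$, we have $x\not\vDash b$, and $f\in\ispart{x}{g'}$. From the hypothesis $(x,h)\uparrow b\vee g'$ there are two possible witnessing rules: one requires $x\vDash b$, the other requires $(x,h)\uparrow g'$. The side condition $x\not\vDash b$ rules out the first, leaving $(x,h)\uparrow g'$, to which the induction hypothesis applies and gives $(x,h)\uparrow f$.

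The main obstacle, to the extent there is one, is simply making explicit why the disjunction rule of $\mathit{part}$ carries the premise $x\not\vDash b$: it is exactly what is needed to exclude the synthesizability rule $\frac{x\vDash b}{(x,h)\uparrow b\vee f}$, which would otherwise give no information about $g'$. Once this observation is in place, the induction is entirely routine, with one case per rule of Definition \ref{def:part}, and no global lemmas beyond inversion on Definition \ref{def:syn} are required.
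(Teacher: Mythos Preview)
Your argument is correct. The only difference from the paper is the choice of induction: the paper proceeds by induction on the derivation of $(x,h)\uparrow g$ (and then inverts $f\in\ispart{x}{g}$ in each case), whereas you induct on the derivation of $f\in\ispart{x}{g}$ and invert $(x,h)\uparrow g$. The two inductions are dual and both follow the shape of $g$, so the case analyses line up; your route is marginally more economical because $\mathit{part}$ has fewer rules than $\uparrow$, hence fewer cases. Either way the disjunction case is the only one with content, and your explanation of why the side condition $x\not\vDash b$ is needed there is exactly right.
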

\begin{proof}
By induction towards the derivation of $(x,h)\uparrow g$, using Definition
\ref{def:syn}.
\end{proof}

\begin{theorem}
\label{thm:val}
If $S_{k,f}^n$ is complete then $S_{k,f}^n\vDash f$.
\end{theorem}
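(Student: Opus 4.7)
The plan is to prove a strengthened statement by structural induction on $h$: for every reachable pair $(x,f)\longrightarrow_n^*(x',g')$ and every $h\in\ispart{x'}{g'}$, the Kripke-LTS $(X\times\mathcal{F},L_\mathit{proj},\longrightarrow_n,(x',g'))$ satisfies $h$. The theorem then follows immediately from the identity rule $f\in\ispart{x}{f}$ of Definition \ref{def:part}, instantiated at $(x',g')=(x,f)$ and $h=f$.

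The main engine in every case is Lemma \ref{lem:part_syn} combined with completeness: since $(x',g')\uparrow g'$ by completeness, and $h\in\ispart{x'}{g'}$, we obtain $(x',g')\uparrow h$, so the synthesizability rule for the top connective of $h$ is available. The base cases ($\mathit{true}$ and $b\in\mathcal{B}$) and the cases $\tless e\tmore h'$, $\diam{b}$, and $\dlf$ then follow directly, because the conclusion of each of these rules in Definition \ref{def:syn} matches the shape of the corresponding semantic clause in Definition \ref{def:val}; in particular $\tless e\tmore h'$ hands us an $e$-successor $(x'',g'')$ together with $h'\in\ispart{x''}{g''}$ to which the inductive hypothesis applies. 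Conjunction reduces to two invocations of the inductive hypothesis after observing that both conjuncts again lie in $\ispart{x'}{g'}$ via the conjunction rules of Definition \ref{def:part}. For $h=b\vee h'$ we split on $x'\vDash b$: if it holds, the disjunction is satisfied directly; otherwise the $\vee$-rule of Definition \ref{def:part} gives $h'\in\ispart{x'}{b\vee h'}$, which composes with $b\vee h'\in\ispart{x'}{g'}$ (transitivity of $\mathit{part}$, proved by a short induction) to place $h'$ in $\ispart{x'}{g'}$ and let the inductive hypothesis fire.

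The hard part will be the two modalities that range over successors: $h=\bleft e\bright h'$ and $h=\inv{h'}$. For these I need an auxiliary lemma that $\mathit{part}$-membership propagates correctly through the starting-point reductions of Definition \ref{def:zero}: whenever $(x',g')\step{e}_0(x'',g'')$, if $\bleft e\bright h'\in\ispart{x'}{g'}$ then $h'\in\ispart{x''}{g''}$, and if $\inv{h'}\in\ispart{x'}{g'}$ then $\inv{h'}\in\ispart{x''}{g''}$. Both are proved by induction on the derivation of the starting-point transition, where the $\mathit{sub}$-side-conditions in the conjunction and $\inv{}$ rules are exactly what guarantee that the target formula still exposes the desired subformula (so that, e.g., a reduction of $\inv{f}$ yields either $\inv{f}$ itself or $\inv{f}\wedge f'$, both of which have $\inv{f}$ as a $\mathit{part}$). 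Since $\longrightarrow_n\,\subseteq\,\longrightarrow_0$ by Definition \ref{def:construction}, the lemma lifts to $\longrightarrow_n$, after which the $\bleft e\bright$ case discharges every $e$-successor via the inductive hypothesis on $h'$, and the $\inv{}$ case discharges every $\longrightarrow_n^*$-reachable successor by first applying the propagation lemma along the path and then invoking the inductive hypothesis on $h'$ (which sits in $\ispart{x''}{g''}$ thanks to the $\inv{}$-rule of Definition \ref{def:part}).
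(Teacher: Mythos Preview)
Your proposal is correct and follows essentially the same route as the paper: both prove the strengthened statement ``if $h\in\ispart{x'}{g'}$ and the structure rooted at $(x',g')$ is complete, then it satisfies $h$'' by structural induction on $h$, using Lemma~\ref{lem:part_syn} to extract $(x',g')\uparrow h$ from completeness, and relying on the fact that $\mathit{part}$-membership propagates along $\longrightarrow_0$-steps for the $\bleft e\bright$ and $\inv{}$ cases. The only difference is cosmetic: the paper carries completeness of $S_{k',g'}^n$ as the inductive invariant and observes that it is inherited by successors, whereas you bake this in by quantifying over $(x,f)\longrightarrow_n^*(x',g')$ up front; and you state explicitly the propagation lemma and the transitivity of $\mathit{part}$ that the paper invokes only by pointing to Definitions~\ref{def:part} and~\ref{def:zero}.
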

\begin{proof}
Assume $k=(X,L,\longrightarrow,x)$ and $g\in\mathcal{F}$. We show a more 
generalized result: if $f\in\ispart{x}{g}$ and $S_{k,g}^n$ is complete then 
$S_{k,g}^n\vDash f$. This immediately leads to the required result, since 
$f\in\ispart{x}{f}$. Note that we have $(x,g)\uparrow g$ by Definition
\ref{def:complete} and due to $(x,g)\longrightarrow^*_n(x,g)$. Also, we
have $(x,g)\uparrow f$, by Lemma \ref{lem:part_syn}.

Apply induction towards the structure of $f$. Suppose that $f\equiv b$, for
some $b\in\mathcal{B}$. Then from $(x,g)\uparrow b$ we have $x\vDash b$,
which directly leads to $S_{k,g}^n\vDash b$, due to the fact that validity
of a state-based formula $b\in\mathcal{B}$ only depends upon the labels
assigned to $x$. 

If $f\equiv f_1\wedge f_2$, and $f_1\wedge f_2\in\ispart{x}{g}$, then
$f_1\in\ispart{x}{g}$ and $f_2\in\ispart{x}{g}$, as is clear from Definition
\ref{def:part}. By induction, we then have $S_{k,g}^n\vDash f_1$ and
$S_{k,g}^n\vDash f_2$. For the next case, suppose that $f\equiv b\vee f'$.
If $x\vDash b$, then $S_{k,g}^n\vDash b$. However, if $x\not\vDash b$, then
$(x,g)\uparrow b$ also does not hold, so $(x,g)\uparrow f'$ must be true.
In addition, we have $f'\in\ispart{x}{g}$. This is precisely the reason
why it is necessary to incorporate a state-based evaluation in Definition
\ref{def:part}. Application of the induction hypothesis now gives $S_{k,g}^n\vDash f'$.

Suppose that $f\equiv\bleft e\bright f'$, and assume that 
$(x,g)\step{e}_n(x',g')$. Using Definition \ref{def:part}, we may then
conclude that $f'\in\ispart{x'}{g'}$. Let $k'=(X,L,\longrightarrow,x')$.
We apply induction in order to obtain $S_{k',g'}^n\vDash f'$. Due to 
the assumption of $(x,g)\step{e}_n(x',g')$, the induction premise for
completeness is satisfied for $S_{k',g'}^n$ as well. If $f\equiv\tless e\tmore f'$,
then by Lemma \ref{lem:part_syn} we have $(x,g)\uparrow\tless e\tmore f'$.
By Definition \ref{def:syn}, there now exists a step $(x,g)\step{e}_n(x',g')$
such that $f'\in\ispart{x'}{g'}$. The latter condition shows why it is
important to have the condition $f'\in\ispart{x'}{g'}$ in Definition
\ref{def:syn}, for the formula $\tless e\tmore f$. We apply the induction
hypothesis to derive $S_{k',g'}^n\vDash f'$, for $k'=(X,L,\longrightarrow,x')$.
Again, the induction premise for completeness in $S_{k',g'}^n$ is satisfied
due to the existence of the step $(x,g)\step{e}(x',g')$, and completeness
of $S_{k,g}^n$. 

The next case considers $f\equiv\inv{f'}$, for some $f'\in\mathcal{F}$.
Assume the existence of a step-sequence $(x,g)\longrightarrow_n^*(x',g')$.
By Definitions \ref{def:part} and \ref{def:zero}, it is clear that
$\inv{f'}\in\ispart{x'}{g'}$, and therefore $f'\in\ispart{x'}{g'}$. This
allows us to apply the induction hypothesis for $f'$, in order to obtain
$S_{k',g'}^n\vDash f'$ for each $(x,g)\longrightarrow_n^*(x',g')$ and
$k'=(X,L,\longrightarrow,x')$. Hence, we obtain $S_{k,g}^n\vDash\inv{f'}$.

Suppose that $f\equiv\diam{b}$, for some $b\in\mathcal{B}$. By Lemma
\ref{lem:part_syn}, there exists a path $(x,g)\longrightarrow_n^*(x',g')$
such that $x'\vDash b$, leading directly to $S_{k,g}^n\vDash\diam{b}$.
For $f\equiv\dlf$, the derivation $(x,g)\uparrow\dlf$ from Lemma
\ref{lem:part_syn} also leads directly to $S_{k,g}^n\vDash\dlf$.
\end{proof}

We show that our synthesis method adheres to controllability by
verifying that the synthesis result is related to the original
plant model via partial bisimulation in Theorem \ref{thm:pbis}.
Note that this implies simulation.

\begin{lemma}
\label{lem:syn_impl_rel}
If $(x,f)\uparrow f$ and $x\step{e}x'$ and $e\in\mathcal{U}$, then there exists
an $f'\in\mathcal{F}$ such that for all $n\in\mathbb{N}$, we have $(x,f)\step{e}_n(x',f')$.
\end{lemma}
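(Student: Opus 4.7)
The plan is to split the proof into two stages: first I would establish the $n=0$ case, and then lift it to every $n$ via the uncontrollable-preservation rule of Definition \ref{def:construction}.

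For the base case I would show, by structural induction on $f$, that whenever $x\step{e}x'$ there exists some $f'\in\mathcal{F}$ with $(x,f)\step{e}_0(x',f')$. The leaf cases $f\equiv b$, $f\equiv\diam{b}$, and $f\equiv\dlf$ each apply a direct rule from Definition \ref{def:zero} (in these cases $f'=\mathit{true}$ works). For the modalities $\bleft e'\bright g$ and $\tless e'\tmore g$, a case split on whether $e=e'$ selects one of the two rules. For $f\equiv f_1\wedge f_2$, the IH supplies reductions $(x,f_i)\step{e}_0(x',f_i')$ for $i\in\{1,2\}$, and a split on $f_2'\in\sub{f_1'}$ selects one of the two conjunction rules. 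For $f\equiv b\vee f_1$, either $x\vDash b$ triggers the first disjunction rule, or else the IH on $f_1$ combined with the second rule yields the desired reduction. The invariant case mirrors conjunction, using the two $\inv{}$ rules with the side condition on $\sub{\inv{f}}$.

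For the induction on $n$, I fix the $f'$ produced by the base case. Given $(x,f)\step{e}_n(x',f')$ and $e\in\mathcal{U}$, the first rule of Definition \ref{def:construction} immediately yields $(x,f)\step{e}_{n+1}(x',f')$ with the same $f'$; so a single invocation of the IH on $n$ closes the argument.

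The main obstacle will be only the bookkeeping of the many structural cases in the base step; each is resolved by one rule of Definition \ref{def:zero} plus a small boolean split. The hypothesis $(x,f)\uparrow f$ does not appear strictly necessary for witnessing mere existence of an $f'$, since the reductions of Definition \ref{def:zero} are exhaustive over the structure of $f$ given any $x\step{e}x'$; I suspect it is kept in the statement to align with the uniform form of the surrounding lemmas in the correctness development.
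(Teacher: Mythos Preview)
Your proposal is correct and mirrors the paper's own proof: structural induction on $f$ to obtain a witness $f'$ with $(x,f)\step{e}_0(x',f')$, followed by induction on $n$ using the uncontrollable-preservation rule of Definition~\ref{def:construction}. Your observation that the hypothesis $(x,f)\uparrow f$ is not actually used---because the rules of Definition~\ref{def:zero} are exhaustive over the shape of $f$ once $x\step{e}x'$ is given---is also correct and a nice sharpening beyond what the paper states.
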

\begin{proof}
Using induction towards the structure of $f$, we may derive the existence of an $f'\in\mathcal{F}$,
such that $(x,f)\step{e}_0(x',f')$. Given that $e\in\mathcal{U}$, it is then straightforwardly
derivable that $(x,f)\step{e}_n(x',f')$, by induction on $n$.
\end{proof}

\begin{theorem}
\label{thm:pbis}
If $S_{k,f}^n$ is complete then $S_{k,f}^n\pbis k$.
\end{theorem}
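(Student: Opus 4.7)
The plan is to exhibit a partial bisimulation relation directly. The natural candidate is
\[
R \;=\; \{\,((y,g),\,y) \;\mid\; (x,f)\longrightarrow_n^*(y,g)\,\},
\]
i.e.\ relate every reachable state-formula pair in $S_{k,f}^n$ to its first projection in $k$. Containment of the initial pair $((x,f),x)$ in $R$ is immediate by reflexivity of $\longrightarrow_n^*$, and the labeling condition $L_\mathit{proj}(y,g)=L(y)$ is immediate from the definition of $L_\mathit{proj}$ in Definition \ref{def:construction}.

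For the forward simulation clause, I would first observe a structural fact about the synthesis starting point: every derivation rule in Definition \ref{def:zero} ultimately produces a step $(y,g)\step{e}_0(y',g')$ only when $y\step{e}y'$ holds in the original transition relation of $k$. This is routine by induction on the derivation of $(y,g)\step{e}_0(y',g')$, noting that each rule either directly uses a premise $x\step{e}x'$ or inherits one from a previous $\step{e}_0$ step. Since $\longrightarrow_n\,\subseteq\,\longrightarrow_0$ by Definition \ref{def:construction}, any step $(y,g)\step{e}_n(y',g')$ projects to $y\step{e}y'$ in $k$, and the target pair $((y',g'),y')$ lies in $R$ because $(x,f)\longrightarrow_n^*(y,g)\step{e}_n(y',g')$.

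The uncontrollable clause is where the completeness assumption and Lemma \ref{lem:syn_impl_rel} do the real work. Let $((y,g),y)\in R$ and suppose $y\step{e}y'$ with $e\in\mathcal{U}$. Since $(x,f)\longrightarrow_n^*(y,g)$, completeness of $S_{k,f}^n$ (Definition \ref{def:complete}) yields $(y,g)\uparrow g$. Lemma \ref{lem:syn_impl_rel} then supplies some $g'\in\mathcal{F}$ with $(y,g)\step{e}_n(y',g')$, and the extended path $(x,f)\longrightarrow_n^*(y,g)\step{e}_n(y',g')$ places $((y',g'),y')$ in $R$, completing the matching.

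The main obstacle I anticipate is the projection lemma for the forward clause: one has to inspect every rule in Definition \ref{def:zero} (including the slightly subtle rules for $\wedge$, $\vee$, and $\inv{\cdot}$ which recursively invoke $\step{e}_0$) to confirm that no rule introduces a $\step{e}_0$ transition whose state component does not originate from an actual $\step{e}$ step in $k$. Once that inductive bookkeeping is in place, the three clauses of Definition \ref{def:pbis} fall out by direct application of completeness plus Lemma \ref{lem:syn_impl_rel}, and no further induction on $n$ is needed because Lemma \ref{lem:syn_impl_rel} already delivers a witness at the $n$-th stage.
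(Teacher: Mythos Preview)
Your proposal is correct and follows essentially the same route as the paper's proof: exhibit a witness relation pairing each state-formula pair with its first projection, dispatch the forward clause by the projection property of $\longrightarrow_0$ (and $\longrightarrow_n\subseteq\longrightarrow_0$), and handle the uncontrollable clause via completeness plus Lemma~\ref{lem:syn_impl_rel}. The only cosmetic difference is the choice of $R$: the paper takes $R=\{((y,g),y)\mid S^n_{(X,L,\longrightarrow,y),g}\text{ is complete}\}$, whereas you restrict to pairs $(y,g)$ reachable from $(x,f)$ in $\longrightarrow_n$; since completeness of $S_{k,f}^n$ makes your $R$ a subset of the paper's, and both are closed under the required steps, the argument goes through identically.
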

\begin{proof}
Let $k=(X,L,\longrightarrow,x)$. According to Definition \ref{def:pbis}, 
we need to provide a witness relation $R$, such that $S_{k,f}^n\pbis_R k$. 
Choose $R=\{((y,g),y)\mid S_{k',g}^n$ is complete, for $k'=(X,L,\longrightarrow,y)\}$. 
Suppose that $((y,g),y)\in R$. If there exists a step $(y,g)\step{e}_n(y',g')$,
then by Definition \ref{def:zero}, there also exists a step $y\step{e}y'$,
upon which we may conclude that $((y',g'),y')\in R$, since completeness
of $S_{k',g}^n$ extends to completeness of $S_{k'',g'}^n$, for $k''=(X,L,\longrightarrow,y')$.
If $y\step{e}y'$, for $e\in\mathcal{U}$, then by Lemma \ref{lem:syn_impl_rel},
there exists a $g'\in\mathcal{F}$, such that $(y,g)\step{e}_n(y',g')$, which
again leads to the conclusion that $((y',g'),y')\in R$. Note that the premise $(y,g)\uparrow g$
in Lemma \ref{lem:syn_impl_rel}, is derived from the completeness of $S_{k',g}^n$. 
\end{proof}

As a final result, we show that the synthesis result is maximal
within the simulation preorder, with respect to all simulants of
the original system which satisfy the synthesized formula. This
result, which implies maximal permissiveness in the context of
supervisory control, is shown in Theorem \ref{thm:max}.

\begin{lemma}
\label{lem:step_val}
For $f\in\mathcal{F}$, $k'=(X',L',\longrightarrow',x')$ and 
$k=(X,L,\longrightarrow,x)$ such that $k'\preceq k$ and $k'\vDash f$,
and if $x'\primestep{e}y'$ and $x\step{e}y$, there exists an $f'\in\mathcal{F}$
such that $(X',L',\longrightarrow',y')\vDash f'$ and $(x,f)\step{e}_0(y,f')$.
\end{lemma}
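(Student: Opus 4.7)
The plan is to induct on the structure of $f$. At each case I need to (i) produce a witness $f' \in \mathcal{F}$, (ii) exhibit the reduction $(x, f) \step{e}_0 (y, f')$ via a rule from Definition \ref{def:zero}, and (iii) verify $(X', L', \longrightarrow', y') \vDash f'$. Two facts from $k' \preceq k$ will be used throughout: since the simulation witness relates $x'$ to $x$, we have $L'(x') = L(x)$, so every state-based formula inherits its truth-value across the two structures, and the reduction rules of Definition \ref{def:zero} are structured so that each shape of $f$ always admits at least one target formula.

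Several cases reduce immediately to $\mathit{true}$. The rules for $f \equiv b \in \mathcal{B}$, $f \equiv \dlf$, $f \equiv \diam{b}$, $f \equiv \bleft e' \bright f''$ with $e \neq e'$, and $f \equiv \tless e' \tmore f''$ all permit a $\mathit{true}$-reduction, which is trivially satisfied at $y'$. For $f \equiv \bleft e \bright f''$ with matching label, Definition \ref{def:zero} gives $(x, \bleft e \bright f'') \step{e}_0 (y, f'')$, and $k' \vDash \bleft e \bright f''$ combined with $x' \primestep{e} y'$ yields $(X', L', \longrightarrow', y') \vDash f''$, so $f' = f''$ works. For $f \equiv b \vee f''$, I case-split on $k' \vDash b$: the positive branch uses the label-equality $L'(x') = L(x)$ to deduce $x \vDash b$ and apply the rule that reduces to $\mathit{true}$; the negative branch forces $k' \vDash f''$ and invokes the IH on $f''$. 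For $f \equiv f_1 \wedge f_2$, apply the IH separately to each conjunct obtaining $f_1', f_2'$, and then pick $f' = f_1'$ or $f' = f_1' \wedge f_2'$ according to whether $f_2' \in \sub{f_1'}$, matching the two conjunction rules.

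I expect the main obstacle to be the invariant case $f \equiv \inv{f''}$. From $k' \vDash \inv{f''}$ we get $k' \vDash f''$, so the IH provides some $f^*$ with $(x, f'') \step{e}_0 (y, f^*)$ and $(X', L', \longrightarrow', y') \vDash f^*$. The two invariant rules in Definition \ref{def:zero} then deliver either $(x, \inv{f''}) \step{e}_0 (y, \inv{f''})$ or $(x, \inv{f''}) \step{e}_0 (y, \inv{f''} \wedge f^*)$, depending on whether $f^* \in \sub{\inv{f''}}$. The crucial extra ingredient is that $\inv{f''}$ is preserved under single-step successors in $k'$: every state reachable from $y'$ along $\longrightarrow'$-steps is also reachable from $x'$, so $(X', L', \longrightarrow', y') \vDash \inv{f''}$ follows directly from $k' \vDash \inv{f''}$. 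Combining this with $(X', L', \longrightarrow', y') \vDash f^*$ from the IH handles both subcases; the only real bookkeeping is lining up the side condition $f^* \in \sub{\inv{f''}}$ with the correct rule variant.
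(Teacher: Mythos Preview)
Your proposal is correct and follows exactly the approach the paper indicates: structural induction on $f$, with the key observation that the label equality $L'(x')=L(x)$ coming from simulation transfers validity of state-based formulas, which is precisely what is needed in the disjunctive case $b\vee f''$. The paper's own proof is only a two-line sketch highlighting this single point, so your write-up is simply a more detailed execution of the same argument.
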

\begin{proof}
By induction towards the structure of $f$. Note that simulation as
given in Definition \ref{def:sim} includes strict equivalence of labels
in related states. This implies that validity of a formula $b\in\mathcal{B}$
is preserved under simulation. This fact must be used in order to derive
existence of a step for a disjunctive formula in the induction argument for $f$.
\end{proof}

\begin{lemma}
\label{lem:rel_ex}
If $(x,f)\step{e}_0(x',f')$ and if $(x',f')\uparrow f'$, with relation
to $\longrightarrow_n$, then $(x,f)\step{e}_n(x',f')$.
\end{lemma}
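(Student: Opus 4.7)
The plan is to proceed by induction on $n$. For the base case $n=0$, the conclusion $(x,f)\step{e}_0(x',f')$ coincides with the hypothesis, so nothing remains to be shown. For the inductive step, assume the result for $n$ and suppose we are given $(x,f)\step{e}_0(x',f')$ together with $(x',f')\uparrow f'$ with respect to $\longrightarrow_{n+1}$.

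The key auxiliary fact I would first establish is monotonicity of the synthesizability predicate under extension of the transition relation: if $\longrightarrow'\,\subseteq\,\longrightarrow''$ and $(y,g)\uparrow h$ holds with respect to $\longrightarrow'$, then $(y,g)\uparrow h$ also holds with respect to $\longrightarrow''$. This is a routine structural induction on $h$ following Definition \ref{def:syn}: the clauses for $b\in\mathcal{B}$, $f_1\wedge f_2$, $b\vee f$, $\bleft e\bright f$ and $\inv{f}$ either do not inspect the relation at all or follow directly from the inductive hypothesis, while the three clauses that do read the relation, namely $\tless e\tmore f$, $\diam{b}$ and $\dlf$, are positive existential statements whose witnesses transfer immediately from $\longrightarrow'$ to the larger $\longrightarrow''$. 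Since the proof of Theorem \ref{thm:term} already observes that the synthesis construction removes but never adds transitions, we have $\longrightarrow_{n+1}\,\subseteq\,\longrightarrow_n$, and therefore $(x',f')\uparrow f'$ lifts from $\longrightarrow_{n+1}$ to $\longrightarrow_n$.

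With this in hand, the inductive hypothesis applied to $(x,f)\step{e}_0(x',f')$ and the lifted synthesizability gives $(x,f)\step{e}_n(x',f')$. Applying the rule of Definition \ref{def:construction} that preserves a step from $\longrightarrow_n$ into $\longrightarrow_{n+1}$ whenever the required synthesizability obligation at level $n$ is met then yields $(x,f)\step{e}_{n+1}(x',f')$, closing the induction. The only genuinely delicate ingredient is the monotonicity lemma for $\uparrow$, but since each clause of Definition \ref{def:syn} is positive in the transition relation, it reduces to straightforward case analysis and poses no real obstacle.
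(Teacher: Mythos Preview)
Your proof is correct and follows essentially the same route as the paper: induction on $n$, with the inductive step lifting $(x',f')\uparrow f'$ from $\longrightarrow_{n+1}$ to $\longrightarrow_n$ and then applying Definition~\ref{def:construction}. The only difference is that you spell out the monotonicity of $\uparrow$ in the transition relation as an explicit auxiliary lemma, whereas the paper simply asserts ``Then also $(x',f')\uparrow f'$ with relation to $\longrightarrow_n$'' without further justification; your extra detail is harmless and arguably clearer.
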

\begin{proof}
Note that the premise $(x',f')\uparrow f'$, with relation to $\longrightarrow_n$,
should be interpreted as if $\longrightarrow_n$ were applied in Definition \ref{def:syn}.
Apply induction towards $n$. If $n\equiv 0$ then it is clear that $(x,f)\step{e}_0(x',f')$.
Suppose that $(x,f)\step{e}_n(x',f')$ for some $n\in\mathbb{N}$, and $(x',f')\uparrow f'$,
with relation to $\longrightarrow_{n+1}$. Then also $(x',f')\uparrow f'$ with relation
to $\longrightarrow_n$. Then, by Definition \ref{def:construction}, it is clear that
all conditions for the derivation of $(x,f)\step{e}_{n+1}(x',f')$ are satisfied.
\end{proof}

\begin{lemma}
\label{lem:part_val_syn}
If $k=(X,L,\longrightarrow,x)$ and $k'=(X',L',\longrightarrow',x')$ such
that $k'\preceq k$, and if $f\in\ispart{x}{g}$ such that $k'\vDash g$, then
$(x,g)\uparrow f$, with regard to $\longrightarrow_n$. 
\end{lemma}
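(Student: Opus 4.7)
The plan is to first establish the auxiliary fact that under the hypotheses we actually have $k'\vDash f$, by induction on the derivation of $f\in\ispart{x}{g}$ from Definition \ref{def:part}. The base case $f\equiv g$ is immediate; the two conjunctive cases follow by splitting $k'\vDash g_1\wedge g_2$ into validity of the relevant conjunct; the disjunctive case uses label preservation under $\preceq$ to transfer $x\not\vDash b$ to $x'\not\vDash b$, thereby forcing $k'\vDash g_1$; and the invariant case uses that $k'\vDash\inv{g_1}$ implies $k'\vDash g_1$ at the root, via the reflexive case of the transitive closure. With $k'\vDash f$ in hand, I then proceed to the main claim by induction on the structure of $f$.

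For the easy cases: if $f\equiv b$, label preservation under simulation yields $x\vDash b$, so $(x,g)\uparrow b$ by the first rule of Definition \ref{def:syn}. For $f\equiv f_1\wedge f_2$, I observe that both $f_i\in\ispart{x}{g}$ (a direct consequence of Definition \ref{def:part}), so the induction hypothesis applies to each conjunct. The $b\vee f'$ case splits on whether $x\vDash b$, with the negative branch using the previously-derived $k'\vDash f'$. The $\bleft e\bright f'$ case is settled by the unconditional rule of Definition \ref{def:syn}; the $\inv{f'}$ case reduces by induction to $(x,g)\uparrow f'$; and $\dlf$ follows from $k'\vDash\dlf$, simulation of the witness step, and the $\dlf$-reduction in Definition \ref{def:zero}.

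The main obstacle lies in the modal cases $\tless e\tmore f'$ and $\diam{b}$, since their synthesizability rules demand witnesses in $\longrightarrow_n$ rather than in $\longrightarrow_0$. For $\tless e\tmore f'$, I would use $k'\vDash\tless e\tmore f'$ to extract a step $x'\primestep{e}y'$ with $(X',L',\longrightarrow',y')\vDash f'$, invoke the definition of simulation to match it with some $x\step{e}y$ such that $(X',L',\longrightarrow',y')\preceq(X,L,\longrightarrow,y)$, and then apply Lemma \ref{lem:step_val} to produce a reduction $(x,g)\step{e}_0(y,g')$ in which the simulant satisfies $g'$. Applying the induction hypothesis at $(y,g')$ yields $(y,g')\uparrow f'$, and Lemma \ref{lem:rel_ex} then lifts the $\longrightarrow_0$-step to $\longrightarrow_n$, completing the case. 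The $\diam{b}$ case is analogous but iterated along a finite path: repeated application of Lemma \ref{lem:step_val} through the witness path in $k'$, combined with repeated use of Lemma \ref{lem:rel_ex}, produces a path in $\longrightarrow_n$ terminating in a state where $b$ holds.

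The critical technical sub-claim underlying both modal cases is that the part relationship propagates through formula reduction: if $f'\in\ispart{x}{g}$ and $(x,g)\step{e}_0(y,g')$, then $f'$'s reduct still occurs as $\ispart{y}{g'}$, so that the induction hypothesis is applicable at the target state. Establishing this invariant requires inspecting each reduction rule of Definition \ref{def:zero} — especially the conjunction and invariant rules that use $\mathit{sub}$, and the disjunction rules that use the state-based evaluation — and verifying that the corresponding closure rule of $\ispart$ in Definition \ref{def:part} applies. This is where most of the care in the proof will be concentrated, and it is also the reason Definition \ref{def:part} was designed to internalize the state-based evaluation for disjunctive formulas.
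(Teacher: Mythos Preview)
Your proposal has a genuine gap in the modal cases. You set up a single structural induction on $f$, but the paper's proof uses a \emph{double} induction: first on $n$, and within that on the structure of $f$. The outer induction on $n$ is not optional --- it supplies exactly the premise you need but do not have.

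Concretely, take your $\tless e\tmore f'$ case. You obtain $(x,g)\step{e}_0(y,g')$ via Lemma~\ref{lem:step_val}, and your structural hypothesis on $f'$ gives $(y,g')\uparrow f'$. You then invoke Lemma~\ref{lem:rel_ex} to lift the step to $\longrightarrow_n$. But the premise of Lemma~\ref{lem:rel_ex} is $(y,g')\uparrow g'$ with respect to $\longrightarrow_n$ --- synthesizability of the \emph{full} target formula $g'$, not of the part $f'$. Since $g'$ is a reduct of $g$ produced by Definition~\ref{def:zero} and bears no structural relation to $f=\tless e\tmore f'$ (and, because of the invariant reduction $\inv{f}\rightsquigarrow\inv{f}\wedge f''$, need not even be smaller than $g$), your structural hypothesis cannot deliver $(y,g')\uparrow g'$. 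The paper closes this gap with the outer induction on $n$: the hypothesis at the previous level, instantiated with the trivial $g'\in\ispart{y}{g'}$ and the simulant $(X',L',\longrightarrow',y')\vDash g'$ coming from Lemma~\ref{lem:step_val}, yields $(y,g')\uparrow g'$ one level down, which is precisely what Lemma~\ref{lem:rel_ex} consumes.

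The same defect recurs in your $\diam{b}$ and $\dlf$ cases. For $\dlf$ you appeal to a step produced by Definition~\ref{def:zero}, i.e.\ a step in $\longrightarrow_0$; but the rule for $\dlf$ in Definition~\ref{def:syn} demands a step in $\longrightarrow_n$, and again you need $(y,g')\uparrow g'$ to lift it. For $\diam{b}$ the problem repeats at every node of the witness path. Your ``critical technical sub-claim'' that $\mathit{part}$ propagates through reductions is correct and is indeed the content of the paper's ``separate argument'', but it only secures $f'\in\ispart{y}{g'}$; it does not substitute for the missing induction on $n$.
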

\begin{proof}
The proof of this lemma is somewhat complicated, and involves induction
towards $n$, and within this induction argument, induction towards the
structure of $f$. For both the inductive cases $n\equiv 0$ and $n+1$, a
number of cases for $f$ may be resolved directly. This applies to the
cases for $f\equiv b$, for $b\in\mathcal{B}$, and $f\equiv\bleft e\bright f'$.
The case for $f\equiv f_1\wedge f_2$ can be resolved via solely the
induction hypotheses for $f_1$ and $f_2$. If $f\equiv b\vee f'$, then
we distinguish between the situations where $x\vDash b$ and $x\not\vDash b$.
In the latter case, we have $f'\in\ispart{x}{g}$, which allows us to
apply the induction hypothesis for $f'$, in order to derive $(x,g)\uparrow f'$. 
If $f\equiv\inv{f'}$, for some $f'\in\mathcal{F}$, then by Definition \ref{def:syn},
we only need to derive $(x,g)\uparrow f'$, which is straightforward using
the induction hypothesis for $f'$. 

The remaining cases for $f\equiv\tless e\tmore f'$, $f\equiv\diam{b}$ and
$f\equiv\dlf$ are somewhat more involved, and rely upon the induction
hypothesis for $n$. Since $k'\vDash g$, and $\tless e\tmore f'\in\ispart{x}{g}$,
it is clear that $k'\vDash\tless e\tmore f'$. This means that there exists
a step $x'\primestep{e}y'$ such that $(X',L',\longrightarrow',y')\vDash f'$.
Since $k'\preceq k$, and because $x'\primestep{e}y'$, there exists a step
$x\step{e}y$, such that $(X',L',\longrightarrow',y')\preceq (X,L,\longrightarrow,y)$.
Using a separate argument, by induction towards the derivation of $\tless e\tmore f'
\in\ispart{x}{g}$, there exists a $g'\in\mathcal{F}$ such that $(X',L',\longrightarrow',y')
\vDash g'$, $f'\in\ispart{y}{g}$ and $(x,g)\step{e}_0(y,g')$. We may then apply
Lemma \ref{lem:rel_ex}, in order to construct a step $(x,g)\step{e}_n(y,g')$ such
that $(y,g')\uparrow f'$ by induction. The premise $(y,g')\uparrow g'$, with
regard to $\longrightarrow_n$, is obtained by the induction hypothesis for $n$.

The next case to consider is $f\equiv\diam{b}$, for some $b\in\mathcal{B}$. 
The construction used here is somewhat similar to the previously applied
construction for the case $f\equiv\tless e\tmore f'$. Since $k'\vDash g$ and
$\diam{b}\in\ispart{x}{g}$, there exists a path $x'[\longrightarrow']^*z'$
such that $z'\vDash b$. We apply induction towards the length of
$x'[\longrightarrow']^*z'$ in order to derive a path $(x,g)\longrightarrow_n^*(z,g'')$,
such that $y\vDash b$. This is sufficient to derive $(x,g)\uparrow\diam{b}$, as
shown in Definition \ref{def:syn}. Just as in the case for $f\equiv\tless e\tmore f'$,
we use a separate argument to derive a step $(x,g)\step{e}_0(y,g')$ such that
$(X',L',\longrightarrow',y')\vDash g'$ and $\diam{b}\in\ispart{y}{g'}$. We
then apply Lemma \ref{lem:rel_ex} and the induction hypothesis for $n$ to
construct a step $(x,g)\step{e}_n(y,g')$. Since we are applying induction 
towards the length of $x'[\longrightarrow']^*z'$, we may repeat this argument
in order to construct a path $(x,g)\longrightarrow_n(z,g'')$ such that $z\vDash b$.
This allows us to conclude that $(x,g)\uparrow\diam{b}$.

The remaining case for $f\equiv\dlf$ also relies upon application of Lemma
\ref{lem:rel_ex} and the induction hypothesis for $n$. Since $\dlf\in\ispart{x}{g}$,
it is clear that $k'\vDash\dlf$ and there exists a step $x'\step{e}y'$. By
simulation, there also exists a step $x\step{e}y$ and by Lemma \ref{lem:step_val},
there exists a $g'\in\mathcal{F}$ such that $(x,g)\step{e}_0(y,g')$. As said, application of
Lemma \ref{lem:rel_ex} and the induction hypothesis for $n$ results in 
$(x,g)\step{e}_n(y,g')$, which is sufficient to derive $(x,g)\uparrow\dlf$, by
Definition \ref{def:syn}.
\end{proof}

\begin{theorem}
\label{thm:max}
If $k'\preceq k$ and $k'\vDash f$ then $k'\preceq S_{k,f}^n$.
\end{theorem}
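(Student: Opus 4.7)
The plan is to exhibit an explicit simulation relation between $k'$ and $S_{k,f}^n$ that extends the given witness for $k'\preceq k$ with the modal bookkeeping component carried by $S_{k,f}^n$. Let $R\subseteq X'\times X$ be a simulation relating $k'=(X',L',\longrightarrow',x')$ to $k=(X,L,\longrightarrow,x)$, and define
\begin{center}
$R'=\{(y',(y,g))\mid (y',y)\in R\text{ and }(X',L',\longrightarrow',y')\vDash g\}.$
\end{center}
The pair $(x',(x,f))$ lies in $R'$ by the two hypotheses $(x',x)\in R$ and $k'\vDash f$, and the label clause of Definition \ref{def:sim} for $R'$ follows immediately from $L'(y')=L(y)=L_\mathit{proj}(y,g)$ whenever $(y',(y,g))\in R'$.

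The core of the argument is the step-matching clause for $R'$. Suppose $(y',(y,g))\in R'$ and $y'\primestep{e}z'$. First, using $R$, obtain a matching step $y\step{e}z$ with $(z',z)\in R$. Second, because $(X',L',\longrightarrow',y')\vDash g$, Lemma \ref{lem:step_val} supplies a formula $g'\in\mathcal{F}$ such that $(X',L',\longrightarrow',z')\vDash g'$ and $(y,g)\step{e}_0(z,g')$. Third, since $(z',z)\in R$ still witnesses $(X',L',\longrightarrow',z')\preceq(X,L,\longrightarrow,z)$ and trivially $g'\in\ispart{z}{g'}$, Lemma \ref{lem:part_val_syn} yields $(z,g')\uparrow g'$ with respect to $\longrightarrow_n$. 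Fourth, Lemma \ref{lem:rel_ex} lifts $(y,g)\step{e}_0(z,g')$ to $(y,g)\step{e}_n(z,g')$. Finally, $(z',(z,g'))\in R'$ since $(z',z)\in R$ and $(X',L',\longrightarrow',z')\vDash g'$, which completes the verification of Definition \ref{def:sim} for $R'$.

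The main obstacle is really absorbed by Lemma \ref{lem:part_val_syn}, whose proof has already been given; once that lemma is in hand, the present theorem reduces to a mechanical chaining of the three preparatory lemmas. The conceptual content is the invariant accompanying each simulation pair: whenever $(y',(y,g))\in R'$, the subformula $g$ remains satisfied by $k'$ from $y'$, and this validity travels along transitions because Lemma \ref{lem:step_val} always supplies a fresh target formula $g'$ satisfied by $z'$, while Lemma \ref{lem:part_val_syn} converts this validity into synthesizability at $z$, which is exactly the local condition Lemma \ref{lem:rel_ex} needs in order to promote a $\longrightarrow_0$-step to a $\longrightarrow_n$-step for any $n$. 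Note that the argument never uses completeness of $S_{k,f}^n$, so the maximality result holds at every intermediate stage of the synthesis construction, not just at the stable point.
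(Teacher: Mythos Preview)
Your proof is correct and follows essentially the same approach as the paper: the same witness relation $R'$, the same chain of Lemmas \ref{lem:step_val}, \ref{lem:part_val_syn}, and \ref{lem:rel_ex}, and the same concluding remark that completeness is not required. If anything, you are slightly more explicit in verifying the initial pair and the label clause, and in noting that the instantiation $g'\in\ispart{z}{g'}$ is trivial.
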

\begin{proof}
Suppose that $k=(X,L,\longrightarrow,x)$ and $k'=(X',L',\longrightarrow',x')$ such
that $R\subseteq X'\times X$ and $k'\preceq_R k$. As can be observed from Definition
\ref{def:sim}, we need to provide a witness $R'\subseteq X'\times (X\times\mathcal{F})$
such that $k'\preceq_{R'} S_{k,f}^n$. Choose $R'=\{(y',(y,g))\mid (y',y)\in R$ and 
$(X',L',\longrightarrow',y')\vDash g\}$. 

Suppose that $(y',(y,g))\in R'$ such that $(X',L',\longrightarrow',y')\vDash g$ and 
$(y',y)\in R$. If there exists a step $y'\primestep{e}z'$, then by Definition
\ref{def:sim}, and due to the fact that $(y',y)\in R$, there exists a step $y\step{e}z$
such that $(z',z)\in R$. We then apply Lemma \ref{lem:step_val} to obtain a $g'\in\mathcal{F}$
such that $(X',L',\longrightarrow',z')\vDash g'$ and $(y,g)\step{e}_0(z,g')$. By Lemma
\ref{lem:part_val_syn}, we derive $(z,g')\uparrow g'$. This allows us to apply
Lemma \ref{lem:rel_ex} to construct a step $(y,g)\step{e}_n(z,g')$. Upon which we
may conclude that $(z',(z,g'))\in R'$. Note that a premise for completeness is not
required in this theorem for maximal permissiveness, since this property is retained
during synthesis.
\end{proof}

\section{Conclusions}
\label{sec:conclusions}
This paper presents a novel approach to controlled system synthesis
for modal logic on non-deterministic plant models. The behavior of
a Kripke-structure with labeled transitions is adapted such that it
satisfies the synthesized requirement. The relationship between the
synthesis result and the original plant specification adheres to
important notions from Ramadge-Wonham supervisory control: controllability
and maximal permissiveness. The requirement specification logic also allows
expressibility of deadlock-freeness and marker-state reachability.
The synthesis approach, via a reduction on modal expressions combined
with an iteratively applied synthesizability test for formulas
assigned to target states of transitions results in an effective synthesis
procedure. Our next research efforts will focus on determining the
effectiveness of this procedure as well as its applicability in case
studies.

\bibliographystyle{plain}
\bibliography{complete}
\end{document}